\def\BibTeX{{\rm B\kern-.05em{\sc i\kern-.025em b}\kern-.08em
    T\kern-.1667em\lower.7ex\hbox{E}\kern-.125emX}}
\newtheorem{theorem}{Theorem}
\newtheorem{lemma}{Lemma}
\newtheorem{remarks}{Remark}
\title{On the Existence of Optimal Exact-Repair MDS Codes for Distributed Storage}
\author{Changho Suh and Kannan Ramchandran \\
Wireless Foundations\\
University of California at Berkeley \\
Email: \{chsuh, kannanr\}@eecs.berkeley.edu}
\begin{document}

\IEEEaftertitletext{
\begin{abstract}
The high repair cost of $(n,k)$ Maximum Distance Separable (MDS) erasure codes has recently motivated a new class of codes, called Regenerating Codes, that optimally trade off storage cost for repair bandwidth.
In this paper, we address bandwidth-optimal $(n,k,d)$ Exact-Repair MDS codes, which allow for any failed node to be repaired exactly with access to arbitrary $d$ survivor nodes, where $k \leq d \leq n-1$. We show the existence of Exact-Repair MDS codes that achieve minimum repair bandwidth (matching the cutset lower bound) for arbitrary admissible $(n,k,d)$, i.e., $k < n$ and $k \leq d \leq n-1$. Our approach is based on interference alignment techniques and uses \emph{vector} linear codes which allow to split symbols into arbitrarily small subsymbols.

\end{abstract}
\begin{keywords}
Exact Repair Codes, MDS Codes, Interference Alignment
\end{keywords}
}

\maketitle

\section{Introduction}

In distributed storage systems, maximum distance separable (MDS) erasure codes are well-known coding schemes that can offer maximum reliability for a given storage overhead. For an  $(n,k)$ MDS code for storage, a source file of size $\mathcal{M}$ bits is divided equally into $k$ units (of size $\frac{\mathcal{M}}{k}$ bits each), and these $k$ data units are expanded into $n$ encoded units, and stored at $n$ nodes. The code guarantees that a user or Data Collector (DC) can reconstruct the source file by connecting to any arbitrary $k$ nodes. In other words, any $(n-k)$ node failures can be tolerated with a minimum storage cost of  $\frac{\mathcal{M}}{k}$ at each of $n$ nodes. While MDS codes are optimal in terms of reliability versus storage overhead, they come with a significant maintenance overhead when it comes to repairing failed encoded nodes to restore the MDS system-wide property.
Specifically, consider failure of a single encoded node and the cost needed to restore this node. It can be shown that this repair incurs an aggregate cost of $\mathcal{M}$ bits of information from $k$ nodes. Since each encoded unit contains only $\frac{\mathcal{M}}{k}$ bits of information, this represents a $k$-fold inefficiency with respect to the repair bandwidth.

This challenge has motivated a new class of coding schemes, called Regenerating Codes \cite{Dimakis:INFOCOM, Wu:Allerton}, which target the information-theoretic optimal tradeoff between storage cost and repair bandwidth. On one end of this spectrum of Regenerating Codes are Minimum Storage Regenerating (MSR) repair codes that can match the minimum storage cost of MDS codes while also significantly reducing repair bandwidth. As shown in \cite{Dimakis:INFOCOM, Wu:Allerton}, the fundamental tradeoff between bandwidth and storage depends on the number of nodes that are connected to repair a failed node, simply called the degee $d$ where $k \leq d \leq n-1$. The optimal tradeoff is characterized by
\begin{align}
\label{eq-MSRpoint}
 ( \alpha, \gamma) = \left( \frac{\mathcal{M}}{k}, \frac{\mathcal{M}}{k} \cdot \frac{d}{d-k+1} \right),
\end{align}
where $\alpha$ and $\gamma$ denote the optimal storage cost and repair bandwidth, respectively for repairing a single failed node, while retaining the MDS-code property for the user.
Note that this code requires the same minimal storage cost (of size $\frac{\mathcal{M}}{k}$) as that of conventional MDS codes, while substantially reducing repair bandwidth by a factor of $\frac{k(d-k+1)}{d}$ (e.g., for $(n,k,d)=(31,6,30)$, there is a $5$x bandwidth reduction).
MSR $(n,k,d)$ repair codes can be considered as Repair MDS codes that $(a)$ have an $(n,k)$ MDS-code property; and $(b)$ can repair single-node failures with minimum repair bandwidth given a repair-degree of $d$. Throughout this paper, we will use Repair MDS codes to indicate MSR repair codes.
%
%
%



While Repair MDS codes enjoy substantial benefits over conventional MDS codes, they come with some limitations in construction. Specifically, the achievable schemes in \cite{Dimakis:INFOCOM, Wu:Allerton} that meet the optimal tradeoff bound of (\ref{eq-MSRpoint}) restore failed nodes in a \emph{functional} manner only, using a random-network-coding based framework. This means that the replacement nodes maintain the MDS-code property (that any $k$ out of $n$ nodes can allow for the data to be reconstructed) but do not \emph{exactly} replicate the information content of the failed nodes.

Mere functional repair can be limiting. First, in many applications of interest, there is a need to maintain the code in systematic form, i.e., where  the user data in the form of $k$ information units are exactly stored at $k$ nodes and parity information (mixtures of $k$ information units) are stored at the remaining $(n-k)$ nodes. Secondly, under functional repair, additional overhead information needs to be exchanged for continually updating repairing-and-decoding rules whenever a failure occurs. This can significantly increase system overhead. A third problem is that the random-network-coding based solution of \cite{Dimakis:INFOCOM} can require a huge finite-field size, which can significantly increase the computational complexity of encoding-and-decoding\footnote{In \cite{Dimakis:INFOCOM}, Dimakis-Godfrey-Wu-Wainwright-Ramchandran translated the regenerating-codes problem into a multicast communication problem where random-network-coding-based schemes require a huge field size especially for large networks. In storage problems, the field size issue is further aggravated by the need to support a dynamically expanding network size due to the need for continual repair.}. Lastly, functional repair is undesirable in storage security applications in the face of eavesdroppers. In this case, information leakage occurs continually due to the dynamics of repairing-and-decoding rules that can be potentially observed by eavesdroppers \cite{Sameer:ISIT2010}.

These drawbacks motivate the need for \emph{exact} repair of failed nodes. This leads to the following question: is there a price for attaining the optimal tradeoff of (\ref{eq-MSRpoint}) with the extra constraint of exact repair? The work in~\cite{KumarRamchandran_MSR} considers partial exact repair (where only systematic nodes are repaired exactly), while the work in~\cite{SuhR_MSR} considers exact repair of all nodes, giving a clear answer with deterministic \emph{scalar linear} codes\footnote{In scalar linear codes, symbols are not allowed to be split into arbitrarily small subsymbols as with vector linear codes.} having small alphabet size for the case of $\frac{k}{n} \leq \frac{1}{2}$ (and $d \geq 2k-1$): it was shown that for this regime, there is no price even with the extra constraint of exact repair. What about for either $\frac{k}{n} > \frac{1}{2}$ or $k \leq d < 2k-1$? The work in~\cite{KumarRamchandran_MSR} sheds some light on this case: specifically, it was shown that under scalar linear codes, when either $\frac{k}{n} > \frac{1}{2} + \frac{2}{n}$ or $k+1 \leq d \leq \max(k+1, 2k-4)$, there \emph{is} a price for exact repair.
What if non-linear or \emph{vector} linear codes are used? The tightness of the optimal tradeoff of (\ref{eq-MSRpoint}) under these assumptions has remained open. In this paper, we show that using \emph{vector} linear codes, the optimal tradeoff of (\ref{eq-MSRpoint}) can be indeed attained \emph{for all admissible values of $(n,k,d)$}, i.e., $k <n$ and $k \leq d \leq n-1$. That is if we are willing to deal with arbitrarily small subsymbols, then Exact-Repair MDS codes can come with no loss of optimality over functional-repair MDS codes. Note that we will use this definition of admissibility throughout the paper.

Our achievable scheme builds on the concept of interference alignment, which was introduced in the context of wireless communication networks~\cite{Mohammad,Jafar:IC}.
In particular, the interference alignment scheme in~\cite{Jafar:IC} that permits an arbitrarily large number of symbol extensions (i.e., \emph{vector} linear codes) forms the basis of our results here. The results in~\cite{KumarRamchandran_MSR} say that under scalar linear codes, the case of either $\frac{k}{n} > \frac{1}{2} + \frac{2}{n}$ or $k+1 \leq d \leq \max(k+1, 2k-4)$ induces more constraints than the available number of design variables. This parallels the problem encountered by Cadambe and Jafar in~\cite{Jafar:IC} in the conceptually similar but physically different context of wireless interference channels. Cadambe and Jafar resolve this issue  in~\cite{Jafar:IC} using the idea of symbol-extension, which is analogous to the idea of vector linear codes for the distributed storage repair problem studied here. Building on the connection described in~\cite{SuhR_MSR} between the wireless interference and the distributed storage repair problems, we leverage the scheme introduced in~\cite{Jafar:IC} to the repair problem, showing the existence of Exact-Repair MDS codes that achieve minimum repair bandwidth (matching the cutset lower bound) for all admissible values of~$(n,k,d)$.

\section{Interference Alignment for Exact-Repair MDS Codes}
\label{sec-Notations}

Linear network coding \cite{Koetter:it,ahlswede:it} (that allows multiple messages to be linearly combined at network nodes) has been established recently as a useful tool for addressing interference issues even in wireline networks where all the communication links are orthogonal and non-interfering. This attribute was first observed in \cite{Wu:ISIT}, where it was shown that interference alignment could be exploited for storage networks, specifically for Exact-Repair MDS codes having small $k$ ($k=2$). However, generalizing interference alignment to large values of $k$ (even $k=3$) proves to be challenging, as we describe in the sequel. In order to appreciate this better, let us first review the scheme of~\cite{Wu:ISIT} that was applied to the exact repair problem. We will then address the difficulty of extending interference alignment for larger systems and describe how to address this in Section \ref{sec-BasisFramework}.

\subsection{Review of $(4,2)$ Exact-Repair MDS Codes~\cite{Wu:ISIT}}
Fig. \ref{fig:42example} illustrates an interference alignment scheme for a $(4,2)$ Exact-Repair MDS code defined over ${\sf GF}(5)$. First one can easily check the MDS property of the code, i.e., all the source files can be reconstructed from any $k(=2)$ nodes out of $n(=4)$ nodes. As an illustration, let us see how failed node 1 (storing $(a_1, a_2)$) can be exactly repaired. We assume that the degree $d$ is $3$, and a source file size  $\mathcal{M}$ is $4$. The cutset bound (\ref{eq-MSRpoint}) then gives the fundamental limits of: storage cost $\alpha = 2$; and repair-bandwidth-per-link $\beta:= \frac{\gamma}{d}=1$.

The example illustrated in Fig. \ref{fig:42example} shows that the parameter set described above is achievable using interference alignment. Here is a summary of the scheme. Recall that the bandwidth-per-link is $\beta=1$ and we use a scalar linear code, i.e., each symbol has unit capacity and cannot be split into arbitrarily small subsymbols. Hence, each survivor node uses a projection vector to project its data into a scalar. Choosing appropriate projection vectors, we get the equations as shown in Fig. \ref{fig:42example}: $(b_1 + b_2)$; $a_1 + 2a_2 + (b_1 + b_2)$; $2 a_1 + a_2 + (b_1 + b_2)$. Observe that the undesired signals $(b_1,b_2)$ (interference) are aligned onto an 1-dimensional linear subspace, thereby achieving interference alignment. Therefore, we can successfully decode $(a_1,a_2)$ with three equations although there are four unknowns.
\begin{figure}[t]
\begin{center}
{\epsfig{figure=./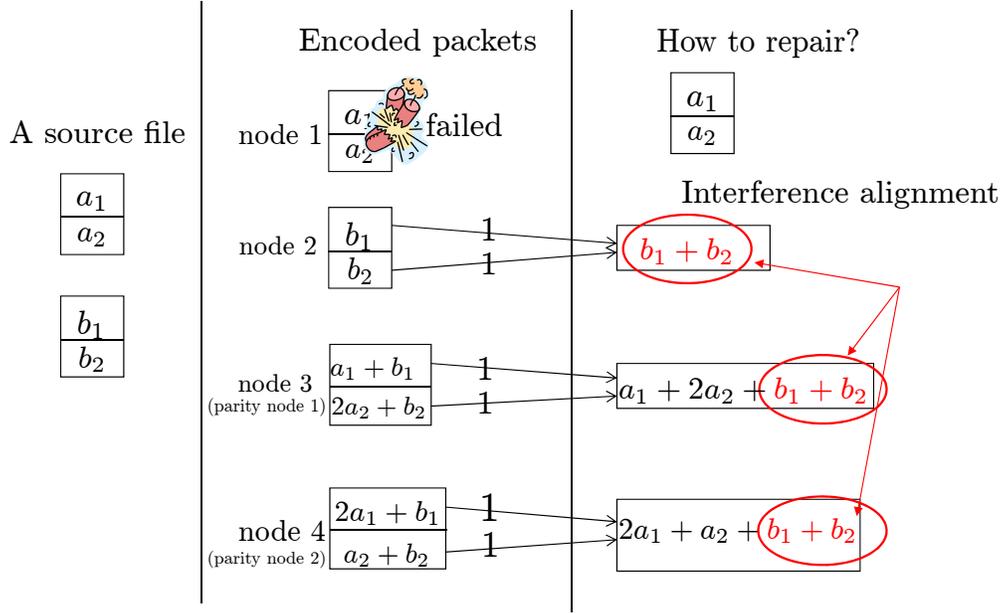, angle=0, width=0.8\textwidth}}
\end{center}
\caption{Interference alignment for a $(4,2)$ Exact-Repair MDS code defined over ${\sf GF}(5)$ \cite{Wu:ISIT}. Designing appropriate projection vectors, we can align interference space of $(b_1, b_2)$ into one-dimensional linear space spanned by $[1,\; 1]^t$. As a result, we can successfully decode 2 desired unknowns $(a_1,a_2)$ from 3 equations containing 4 unknowns $(a_1,a_2,b_1,b_2)$.} \label{fig:42example}

\end{figure}
Similarly, we can repair $(b_1,b_2)$ when it has failed.

For parity node repair, a remapping technique is introduced. The idea is to define parity node symbols with new variables as follows:
\begin{align*}
&\textrm{Node 3: }a_1' := a_1 + b_1; \;\;  a_2' := 2a_2 + b_2;\\
&\textrm{Node 4: }b_1' := 2a_1 + b_1;\;\;  b_2' :=a_2 + b_2.
\end{align*}
We can then rewrite $(a_1,a_2)$ and $(b_1,b_2)$ with respect to $(a_1',a_2')$ and $(b_1',b_2')$. In terms of prime notation, parity nodes turn into systematic nodes and vice versa. With this remapping, one can easily design projection vectors for exact repair of parity nodes.

\begin{figure}[t]
\begin{center}
{\epsfig{figure=./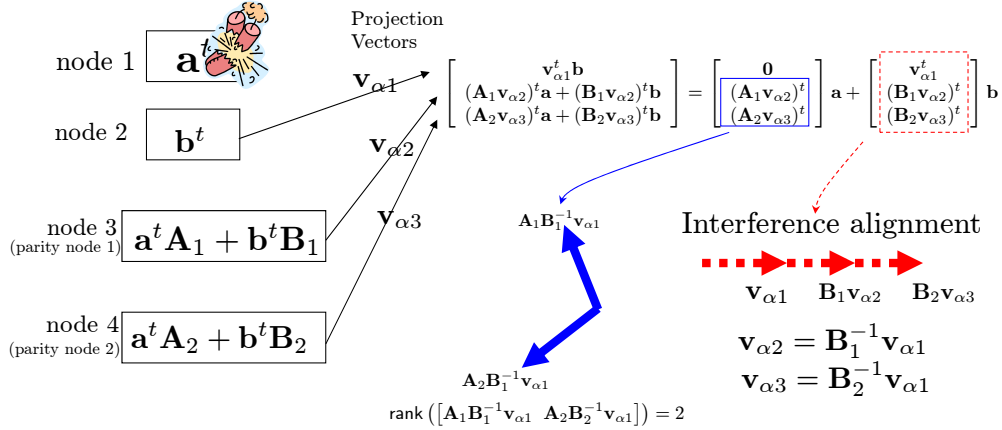, angle=0, width=0.8\textwidth}}
\end{center}
\caption{Geometric interpretation of interference alignment. The blue solid-line and red dashed-line vectors indicate linear subspaces with respect to ``$\mathbf{a}$'' and ``$\mathbf{b}$'', respectively. The choice of $\mathbf{v}_{\alpha 2} = \mathbf{B}_1^{-1} \mathbf{v}_{\alpha 1}$ and $\mathbf{v}_{\alpha 3} = \mathbf{B}_2^{-1} \mathbf{v}_{\alpha 1}$ enables interference alignment. For the specific example of Fig. \ref{fig:42example}, the corresponding encoding matrices are
$\mathbf{A}_1 = \left[1, 0; 0, 2 \right]$, $\mathbf{B}_1 = \left[1, 0; 0,1 \right]$.
$\mathbf{A}_2 = \left[ 2,  0; 0, 1 \right]$, $\mathbf{B}_2 = \left[ 1,  0; 0,  1 \right]$.} \label{fig:GeometricView}
\end{figure}

\subsection{Geometric Interpretation}
Using matrix notation, we provide geometric interpretation of interference alignment for the same example in Fig.~\ref{fig:42example}. Let $\mathbf{a}= (a_1,a_2)^t$ and $\mathbf{b}= (b_1,b_2)^t$  be 2-dimensional information-unit vectors, where $(\cdot)^t$ indicates a transpose. Let $\mathbf{A}_i$ and $\mathbf{B}_i$ be $2$-by-$2$ encoding submatrices for parity node $i$ ($i=1,2$). Finally we define 2-dimensional projection vectors $\mathbf{v}_{\alpha i}$'s ($i=1,2,3$).

Let us consider exact repair of systematic node 1. By connecting to three nodes, we get: $\mathbf{b}^{t} \mathbf{v}_{\alpha 1}$; $\mathbf{a}^t (\mathbf{A}_1 \mathbf{v}_{\alpha 2}) +\mathbf{b}^t (\mathbf{B}_1  \mathbf{v}_{\alpha 2})$;
$ \mathbf{a}^t ( \mathbf{A}_2 \mathbf{v}_{\alpha 3} ) + \mathbf{b}^t ( \mathbf{B}_2  \mathbf{v}_{\alpha 3} )$. Recall the goal of decoding 2 desired unknowns out of 3 equations including 4 unknowns. To achieve this goal, we need:
\begin{align}
{\sf rank} \left( \left[
    \begin{array}{c}
      (\mathbf{A}_1 \mathbf{v}_{\alpha 2})^t \\
      ( \mathbf{A}_2 \mathbf{v}_{\alpha 3} )^t \\
    \end{array}
  \right] \right) = 2; \;\; {\sf rank} \left( \left[
    \begin{array}{c}
      \mathbf{v}_{\alpha 1}^t \\
      (\mathbf{B}_1  \mathbf{v}_{\alpha 2})^t \\
      ( \mathbf{B}_2  \mathbf{v}_{\alpha 3} )^t \\
    \end{array}
  \right] \right) = 1.
\end{align}
The second condition can be met by setting $\mathbf{v}_{\alpha 2} = \mathbf{B}_1^{-1} \mathbf{v}_{\alpha 1}$ and $\mathbf{v}_{\alpha 3} = \mathbf{B}_2^{-1} \mathbf{v}_{\alpha 1}$. This choice forces the interference space to be collapsed into a one-dimensional linear subspace, thereby achieving interference alignment. With this setting, the first condition now becomes
\begin{align}
\label{eq_42_1}
\mathsf{rank} \left( \left[ \mathbf{A}_1 \mathbf{B}_1^{-1} \mathbf{v}_{\alpha 1} \;\;
\mathbf{A}_2  \mathbf{B}_2^{-1} \mathbf{v}_{\alpha 1} \right] \right) = 2.
\end{align}
It can be easily verified that the choice of $\mathbf{A}_i$'s and $\mathbf{B}_i$'s given in Figs. \ref{fig:42example} and \ref{fig:GeometricView} guarantees the above condition.
When the node 2 fails, we get a similar condition:
\begin{align}
\label{eq_42_2}
\mathsf{rank} \left( \left[ \mathbf{B}_1 \mathbf{A}_1^{-1} \mathbf{v}_{\beta 1} \;\;
\mathbf{B}_2  \mathbf{A}_2^{-1} \mathbf{v}_{\beta 1} \right] \right) = 2,
\end{align}
where $\mathbf{v}_{\beta i}$'s denote projection vectors for node 2 repair. This condition also holds under the given choice of encoding matrices.
With this remapping, one can easily design projection vectors for exact repair of parity nodes.

\subsection{Connection with Interference Channels in Communication Problems}
Observe the three equations shown in Fig. \ref{fig:GeometricView}:
\begin{align*}
\underbrace{\left[
    \begin{array}{c}
      \mathbf{0} \\
      (\mathbf{A}_1 \mathbf{v}_{\alpha 2})^t \\
      ( \mathbf{A}_2 \mathbf{v}_{\alpha 3} )^t \\
    \end{array}
  \right] \mathbf{a}}_{desired\;signals} + \underbrace{\left[
    \begin{array}{cc}
      \mathbf{v}_{\alpha 1}^t \\
      (\mathbf{B}_1  \mathbf{v}_{\alpha 2})^t \\
      ( \mathbf{B}_2  \mathbf{v}_{\alpha 3} )^t \\
    \end{array}
  \right] \mathbf{b}}_{interference}.
\end{align*}
Separating into two parts, we can view this problem as a wireless communication problem, wherein a subset of the information is desired to be decoded in the presence of interference. Note that for each term (e.g., $\mathbf{A}_1 \mathbf{v}_{\alpha 2}$), the matrix $\mathbf{A}_1$ and vector $\mathbf{v}_{\alpha 2}$ correspond to channel matrix and transmission vector in wireless communication problems, respectively.

There are, however, significant differences. In the wireless communication problem, the channel matrices are provided by nature and therefore not controllable. The transmission strategy alone (vector variables) can be controlled for achieving interference alignment. On the other hand, in our storage repair problems, both matrices and vectors are controllable, i.e., projection vectors and encoding matrices can be arbitrarily designed, resulting in more flexibility. However, our storage repair problem comes with unparalleled challenges due to the MDS requirement and the  multiple failure configurations. These induce multiple interference alignment constraints that need to be simultaneously satisfied.
What makes this difficult is that the encoding matrices, once designed, must be the same for all repair configurations. This is particularly acute for large values of $k$ (even $k=3$), as the number of possible failure configurations increases with $n$ (which increases with $k$).


\section{A Proposed Framework for Exact-Repair MDS Codes}
\label{sec-BasisFramework}
%

We propose a conceptual framework based on vector linear codes to address the exact repair problem. As described earlier, this framework is based on that of interference alignment for wireless channels in~\cite{Jafar:IC}.
We leverage the connection between the two problems to develop Exact-Repair MDS codes that are optimal in repair bandwidth for all admissible values of $(n,k,d)$.


Our framework consists of four components:
(1) developing a code structure for exact repair of systematic nodes based on the vector linear codes; (2) drawing a \emph{dual} structure between the systematic and parity node repair; (3) guaranteeing the MDS-code property; (4) providing a probabilistic guarantee of the existence of the code for a large enough alphabet size. In particular, the
diagonal structure of single-antenna wireless channels (exploited in~\cite{Jafar:IC}) forms the basis of the structure of encoding submatrices of our codes. The framework covers all admissible values of $(n,k,d)$. This contrasts the scalar-linear code based framework in~\cite{SuhR_MSR} which covers the case of $\frac{k}{n} \leq \frac{1}{2}$ and $d \geq 2k-1$, but which provides deterministic codes with small alphabet size and guaranteed zero error.
Furthermore, addressing different code parameters in the case of $\frac{k}{n} \leq \frac{1}{2}$ and $d \geq 2k-1$ requires specific attention, such as the design of puncturing codes introduced in~\cite{KumarRamchandran_MSR}. See~\cite{SuhR_MSR} for details. In contrast, here we target only the existence of exact-repair codes without specifying constructions. This allows for a simpler characterization of the solution space for the entire range of admissible repair code parameters. In order to convey the concepts in a clear and concise manner, we first focus on the simplest example which does not belong to the framework in~\cite{SuhR_MSR}: $(6,3,4)$ Exact-Repair MDS codes. This example is a representative of the general case of $k<n$ and $k \leq d \leq n-1$, with the generalization following in a straightforward way from this example. This will be discussed in Section~\ref{sec-generalization}.


\subsection{Systematic Node Repair}
\label{sec:FrameworkSystematic}

For $k \geq 3$ (more-than-two interfering information units), achieving interference alignment for exact repair turns out to be significantly more complex than the $k=2$ case. Fig.~\ref{fig:63_EMSR_Challenge} illustrates this difficulty through the example of repairing node 1 for a $(6,3,4)$ code. In accordance with the $(4,2)$ code example in Figs.~\ref{fig:42example} and~\ref{fig:GeometricView}, we choose $\mathcal{M}=6$ so that repair-bandwidth-per-link has unit capacity ($\beta:= \frac{\gamma}{d}=1$). By the optimal tradeoff of (\ref{eq-MSRpoint}), this gives $\alpha=2$. Suppose that we use scalar linear codes, i.e., each symbol has unit capacity and cannot be chopped up into arbitrarily smaller chunks. We define $\mathbf{a}=(a_1,a_2)^t$, $\mathbf{b}=(b_1,b_2)^t$ and $\mathbf{c}=(c_1,c_2)^t$. We define 2-by-2 encoding submatrices of $\mathbf{A}_i$, $\mathbf{B}_i$ and  $\mathbf{C}_i$ (for $i=1,2,3$); and 2-dimensional projection vectors $\mathbf{v}_{\alpha i}$'s.

\begin{figure}[t]
\begin{center}
{\epsfig{figure=./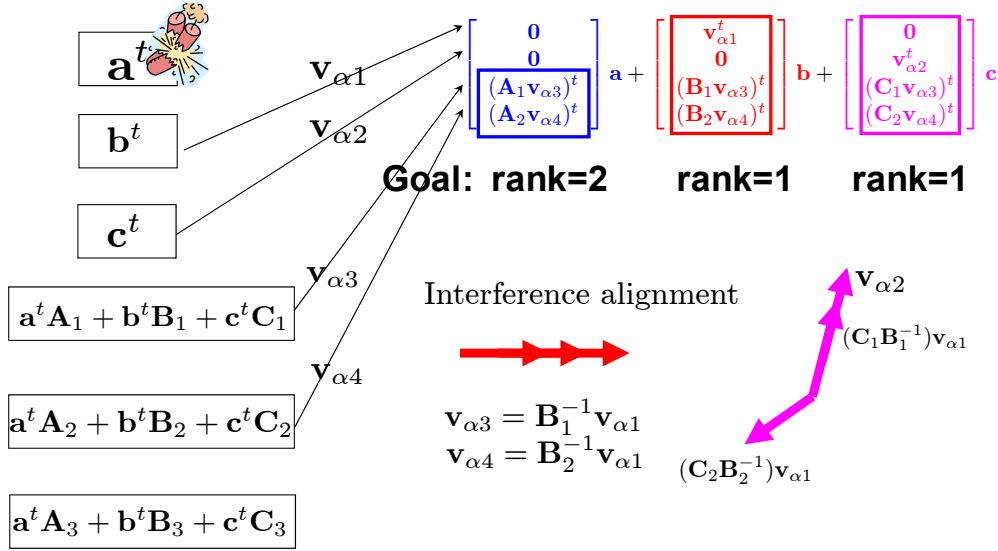, angle=0, width=0.8\textwidth}}
\end{center}
\caption{Difficulty of achieving interference alignment simultaneously when using scalar linear codes}\label{fig:63_EMSR_Challenge}
\end{figure}

Suppose that survivor nodes $(2,3,4,5)$ are connected for exact repair of node 1. We then get the 4 $(=d)$ equations:
\begin{align*}
  \left[
    \begin{array}{c}
      \mathbf{0} \\
      \mathbf{0}  \\
      (\mathbf{A}_1 \mathbf{v}_{\alpha 3})^t \\
       ( \mathbf{A}_2 \mathbf{v}_{\alpha 4} )^t  \\
     \end{array}
  \right] \mathbf{a} +
 \left[
    \begin{array}{c}
      \mathbf{v}_{\alpha 1}^t  \\
       \mathbf{0} \\
        (\mathbf{B}_1  \mathbf{v}_{\alpha 3})^t \\
       ( \mathbf{B}_2  \mathbf{v}_{\alpha 4} )^t \\
         \end{array}
  \right] \mathbf{b} + \left[
    \begin{array}{c}
       \mathbf{0} \\
        \mathbf{v}_{\alpha 2}^t \\
     (\mathbf{C}_1  \mathbf{v}_{\alpha 3})^t \\
        (\mathbf{C}_2  \mathbf{v}_{\alpha 4})^t \\
     \end{array}
  \right] \mathbf{c}.
  \end{align*}
In order to successfully recover the desired signal components of ``$\mathbf{a}$'', the matrices associated with $\mathbf{b}$ and $\mathbf{c}$ should have rank 1, respectively, while the matrix associated with $\mathbf{a}$ should have full rank of 3. In accordance with the $(4,2)$ code example in Fig.~\ref{fig:GeometricView}, if one were to set $\mathbf{v}_{\alpha 3} = \mathbf{B}_1^{-1} \mathbf{v}_{\alpha 1}$ and $\mathbf{v}_{\alpha 4} = \mathbf{B}_2^{-1} \mathbf{v}_{\alpha 1}$, then it is possible to achieve interference alignment with respect to $\mathbf{b}$. However, this choice also specifies the interference space of $\mathbf{c}$. If the $\mathbf{B}_i$'s and $\mathbf{C}_i$'s are not designed judiciously, interference alignment is not guaranteed for $\mathbf{c}$. Hence, it is not evident how to achieve interference alignment at the same time.

\begin{figure}[t]
\begin{center}
{\epsfig{figure=./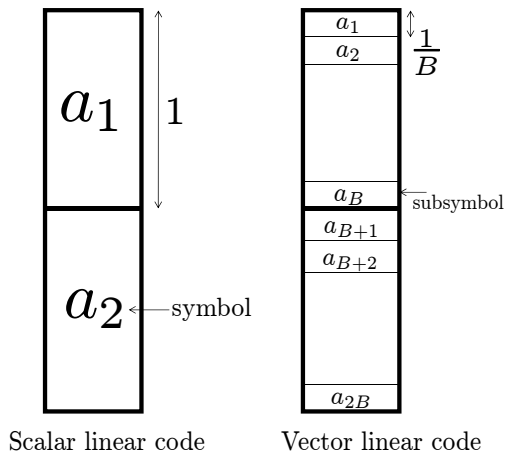, angle=0, width=0.4\textwidth}}
\end{center}
\caption{Illustration of the idea of vector linear codes through storage node 1 in the $(6,3,4)$ code example. In scalar linear codes, symbols are not allowed to be split. On the other hand, vector linear codes allow to split symbols into arbitrarily small subsymbols. In this example, node 1 stores $\alpha=2$ symbols, each of which has unit capacity. In vector linear codes, this unit-capacity symbol can be split into subsymbols with arbitrarily small capacity. For example, we can split each symbol into $B$ number of subsymbols, so each subsymbol has $\frac{1}{B}$ capacity.}\label{fig:VectorLinearCodes}
\end{figure}

In order to address the challenge of simultaneous interference alignment, we invoke the idea of symbol extension introduced in~\cite{Jafar:IC}, which is equivalent to the concept of vector linear codes in the storage repair problem. Fig.~\ref{fig:VectorLinearCodes} illustrates the idea of vector linear codes through storage node 1 in the $(6,3,4)$ code example. While scalar linear codes do not allow symbol splitting, vector linear codes permit the splitting of symbols into arbitrarily small subsymbols. In this example, each node stores $\alpha=2$ symbols, each of which has unit capacity. In vector linear codes, this unit-capacity symbol is allowed to be split into subsymbols with arbitrary small capacity. In this example, we split each symbol into $B$ number of subsymbols, so each subsymbol has $\frac{1}{B}$ capacity.

\begin{figure}[t]
\begin{center}
{\epsfig{figure=./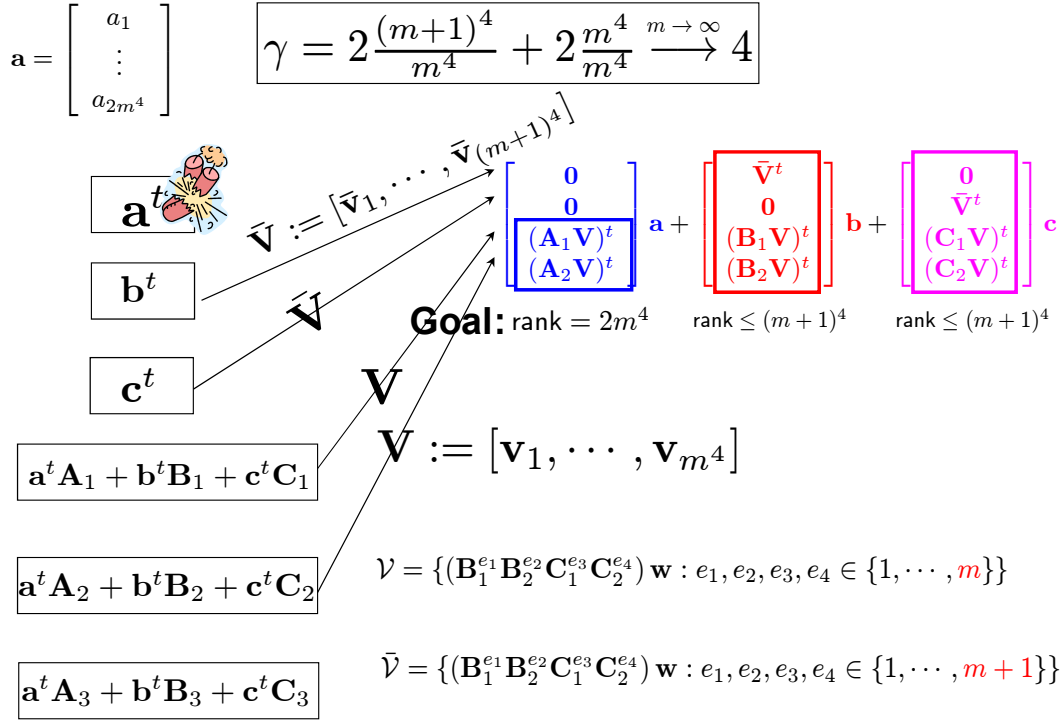, angle=0, width=0.85\textwidth}}
\end{center}
\caption{Illustration of exact repair of systematic node 1 for $(6,3,4)$ Exact-Repair MDS codes. We split each symbol into $B=m^N$ number of subsymbols, where $m$ is an arbitrarily large positive integer and the exponent $N$ is equal to 4 and is carefully chosen depending on code parameters, i.e., $N=(k-1)(d-k+1)=4$. This corresponds to the total number of encoding submatrices involved in the connection except for those associated with desired signals. Note that each subsymbol has $\frac{1}{m^4}$ capacity. The maximum file size (based on the optimal tradeoff of~(\ref{eq-MSRpoint})) is $\mathcal{M}=6 \; \textrm{units}$, inducing a storage cost $\alpha=2$ units. Hence, each storage contains $2m^4$ number of subsymbols and the size of encoding submatrices is $2m^4$-by-$2m^4$. We consider diagonal encoding submatrices. A failed node is exactly repaired by having systematic and parity survivor nodes project their data onto linear subspaces spanned by column vectors of $\mathbf{\bar{V}}:=[\mathbf{\bar{v}}_1, \cdots, \mathbf{\bar{v}}_{(m+1)^4}]$ and $\mathbf{V}:=[\mathbf{v}_1, \cdots, \mathbf{v}_{m^4}]$, respectively. Here $\mathbf{\bar{v}}_i \in \mathcal{\bar{V}}$ and $\mathbf{v}_i \in \mathcal{V}$. Notice that $\mathbf{B}_1 \mathbf{v}_i, \mathbf{B}_2 \mathbf{v}_i, \mathbf{C}_1 \mathbf{v}_i, \mathbf{C}_2 \mathbf{v}_i \in \mathcal{\bar{V}}, \forall i=1,\cdots, m^4$. Hence, the matrix associated with interference $\mathbf{b}$ has rank of at most $(m+1)^4$ instead of $2m^4$. Similarly the matrix associated with interference $\mathbf{c}$ has rank of at most $(m+1)^4$. This enables simultaneous interference alignment as $m \rightarrow \infty$. On the other hand, ${\sf rank}[\mathbf{A}_1 \mathbf{V}, \mathbf{A}_2 \mathbf{V}]= 2 m^4$ with probability 1, providing probabilistic guarantee of decodability of desired signals. Finally, notice that total repair bandwidth $\gamma = 2 \frac{(m+1)^4}{m^4} + 2 \cdot 1$ approaches the cutset lower bound of 4 units as $m$ goes to infinity. Therefore, we can ensure exact repair of systematic node 1 with minimum repair bandwidth matching the cutset lower bound.}\label{fig:63_EMSR}
\end{figure}

This idea of vector linear codes is key to interference alignment for the storage repair problem. Fig.~\ref{fig:63_EMSR} illustrates exact repair of systematic node 1 for $(6,3,4)$ Exact-Repair MDS codes. Using vector linear codes, we split each symbol into $B=m^N$ number of subsymbols, where $m$ is an arbitrarily large positive integer and the exponent $N$ is carefully chosen depending on code parameters. Specifically,
\begin{align}
N = (k-1) (d-k+1).
\end{align}
This choice of $N$ and the form of $B=m^N$ are closely related to the scheme to be described in the sequel. In this example, $N=4$. The maximum file size (based on the cutset bound of~(\ref{eq-MSRpoint})) is $\mathcal{M}=6$ units, inducing a storage cost $\alpha=2$ units. Since each subsymbol has $\frac{1}{m^4}$ capacity, each storage contains $\alpha m^4 (= 2 m^4)$ number of subsymbols, e.g., $\mathbf{a}^t = (a_1, \cdots, a_{2m^4}),$ where $a_i$ indicates a subsymbol. Note that the size of encoding submatrices ($\mathbf{A}_i, \mathbf{B}_i, \mathbf{C}_i$) is $2m^4$-by-$2m^4$. We consider \emph{diagonal} encoding submatrices. As pointed out in~\cite{Jafar:IC}, the diagonal matrix structure ensures \emph{commutativity} and this property provides the key to the interference alignment scheme (to be described shortly):
\begin{align}
\mathbf{A}_i = \left[
                 \begin{array}{cccc}
                   \alpha_{i,1} & 0 & \cdots & 0 \\
                   0 &  \alpha_{i,2}&  \cdots & 0 \\
                   \vdots & \vdots & \ddots & \vdots \\
                   0 & \cdots & 0 & \alpha_{i,2m^4} \\
                 \end{array}
               \right] (\textrm{\emph{commutative} property holds}).
\end{align}

A failed node 1 is exactly repaired through the following steps. Suppose without loss of generality that survivor nodes $(2,3,4,5)$ are used for exact repair of node 1, i.e., $k-1=2$ systematic nodes and $d-k+1=2$ parity nodes. One can alternatively use 1 systematic node and 3 parity nodes for repair instead. This does not fundamentally alter the analysis, and will be covered in Remark~\ref{remark:Arbitaryd} in the next section. For the time being, assume the above configuration for the connection: $k-1$ systematic nodes and $d-k+1$ parity nodes. Parity survivor nodes project their data using the following \emph{projection matrix}:
\begin{align}
\mathbf{V}:=[\mathbf{v}_1, \cdots, \mathbf{v}_{m^4}] \in \mathbb{F}_q^{2m^4 \times m^4},
\end{align}
where $\mathbf{v}_i \in \mathcal{V}$. The set $\mathcal{V}$ is defined as:
\begin{align}
\label{eq:Vform}
\mathcal{V} := \left \{ \left( \mathbf{B}_1^{e_1}
\mathbf{B}_2^{e_2}
\mathbf{C}_1^{e_3}
\mathbf{C}_2^{e_4}
\right) \mathbf{w}: e_1, e_2, e_3, e_4  \in \left \{ 1, \cdots , m \right \} \right \},
\end{align}
where $\mathbf{w}=[1,\cdots,1]^t$. Note that $|\mathcal{V}| \leq m^4$. The vector $\mathbf{v}_i$ maps to a different sequence of $(e_1,e_2,e_3,e_4)$. For example, we can map:
\begin{align}
\begin{split}
\label{eq:vimapping}
\mathbf{v}_1 &= \mathbf{B}_1
\mathbf{B}_2
\mathbf{C}_1
\mathbf{C}_2 \mathbf{w}, \\
\mathbf{v}_2 &= \mathbf{B}_1^2
\mathbf{B}_2
\mathbf{C}_1
\mathbf{C}_2 \mathbf{w}, \\
\mathbf{v}_3 &= \mathbf{B}_1^3
\mathbf{B}_2
\mathbf{C}_1
\mathbf{C}_2\mathbf{w} , \\
&\vdots \\
\mathbf{v}_{m^4-1} &= \mathbf{B}_1^m
\mathbf{B}_2^m
\mathbf{C}_1^m
\mathbf{C}_2^{m-1} \mathbf{w}, \\
\mathbf{v}_{m^4} &= \mathbf{B}_1^m
\mathbf{B}_2^m
\mathbf{C}_1^m
\mathbf{C}_2^m \mathbf{w}.
\end{split}
\end{align}
Let us consider the equations downloaded from parity node 1 and 2 (node 4 and 5):
\begin{align}
\begin{split}
\mathbf{a}^t (\mathbf{A}_1 \mathbf{V}) +
\mathbf{b}^t (\mathbf{B}_1 \mathbf{V}) +
\mathbf{c}^t (\mathbf{C}_1 \mathbf{V}); \\
\mathbf{a}^t (\mathbf{A}_2 \mathbf{V}) +
\mathbf{b}^t (\mathbf{B}_2 \mathbf{V}) +
\mathbf{c}^t (\mathbf{C}_2 \mathbf{V}).
\end{split}
\end{align}
Note that by (\ref{eq:Vform}), any column vector in $[\mathbf{B}_1 \mathbf{V}, \mathbf{B}_2 \mathbf{V}]$ or $[\mathbf{C}_1 \mathbf{V}, \mathbf{C}_2 \mathbf{V}]$ is an element of $\mathcal{\bar{V}}$ defined as:
\begin{align}
\label{eq:Vbarform}
\mathcal{\bar{V}} := \left \{ \left( \mathbf{B}_1^{e_1}
\mathbf{B}_2^{e_2}
\mathbf{C}_1^{e_3}
\mathbf{C}_2^{e_4}
\right) \mathbf{w}: e_1, e_2, e_3, e_4  \in \left \{ 1, \cdots , m +1 \right \} \right \}.
\end{align}
This implies that ${\sf rank}[\mathbf{B}_1 \mathbf{V}, \mathbf{B}_2 \mathbf{V}] \leq (m+1)^4$ and ${\sf rank}[\mathbf{C}_1 \mathbf{V}, \mathbf{C}_2 \mathbf{V}] \leq (m+1)^4$. This allows for simultaneous interference alignment. Systematic survivor nodes project their data using the following \emph{projection matrix}:
\begin{align}
\mathbf{\bar{V}}:=[\mathbf{\bar{v}}_1, \cdots, \mathbf{\bar{v}}_{(m+1)^4}] \in \mathbb{F}_q^{2m^4 \times (m+1)^4},
\end{align}
where $\mathbf{\bar{v}}_i \in \mathcal{\bar{V}}$. We also map $\mathbf{\bar{v}}_i$ to a difference sequence of $(e_1,e_2,e_3,e_4)$ as in~(\ref{eq:vimapping}). We can then guarantee that:
\begin{align}
\begin{split}
&{\sf span}[\mathbf{B}_1 \mathbf{V}, \mathbf{B}_2 \mathbf{V}] \subset {\sf span [\mathbf{\bar{V}}}]  \\
&{\sf span}[\mathbf{C}_1 \mathbf{V}, \mathbf{C}_2 \mathbf{V}] \subset {\sf span} [\mathbf{\bar{V}}].
\end{split}
\end{align}
Hence, we can completely get rid of any interference. Now let us analyze the decodability of the desired signal vector. To successfully recover $\mathbf{a}$, we need:
\begin{align}
\label{eq:fullrank_desired}
{\sf rank}[\mathbf{A}_1 \mathbf{V}, \mathbf{A}_2 \mathbf{V}] = 2m^4.
\end{align}
Using standard arguments based on the technique in~\cite{Jafar:IC} and Schwartz-Zippel lemma~\cite{Motwani}, we can ensure the condition of~(\ref{eq:fullrank_desired}) \emph{with probability 1} for a sufficiently large field size $q$. Specifically, we randomly and uniformly choose each diagonal element (non-zero) of all of the encoding submatrices in $\mathbb{F}_q$. We then compute the determinant of $[\mathbf{A}_1 \mathbf{V}, \mathbf{A}_2 \mathbf{V}]$ by adapting the technique in~\cite{Jafar:IC}. Using Schwartz-Zippel lemma~\cite{Motwani}, we can then show that the probability that the polynomial of the determinant is identically zero goes to zero for a sufficiently large field size. The proof is tedious and therefore we omit details. See~\cite{Jafar:IC,Motwani} for details.

We now validate that total repair bandwidth is:
\begin{align}
\begin{split}
\gamma &= (k-1) \frac{(m+1)^4}{m^4} + (d-k+1) \cdot \frac{m^4}{m^4} \\
&=  2 \frac{(m+1)^4}{m^4} + 2 \cdot 1  \\
&\longrightarrow 4 \textrm{ units}.
\end{split}
\end{align}
The first equality is because each subsymbol has capacity of $\frac{1}{m^4}$ and we use projection matrix $\mathbf{\bar{V}} \in \mathbb{F}_q^{2m^4 \times (m+1)^4}$ and $\mathbf{V} \in \mathbb{F}_q^{2m^4 \times m^4}$ when connecting to systematic nodes and parity nodes, respectively. Note that as $m$ goes to infinity, total repair bandwidth approaches minimum repair bandwidth matching the cutset lower bound of~(\ref{eq-MSRpoint}).
%


\subsection{Dual Relationship between Systematic and Parity Node Repair}
\label{sec:ParityNodeRepair}

We will show that parity nodes can be repaired by drawing a \emph{dual} relationship with systematic nodes. The procedure has two steps. The first is to remap parity nodes with $\mathbf{a}'$, $\mathbf{b}'$, and $\mathbf{c}'$, respectively:
\begin{align*}
\left[
    \begin{array}{ccc}
      \mathbf{a}'\\
      \mathbf{b}'\\
      \mathbf{c}'\\
    \end{array}
  \right]:=
\left[
    \begin{array}{ccc}
      \mathbf{A}_1^t & \mathbf{B}_1^t & \mathbf{C}_1^t\\
      \mathbf{A}_2^t & \mathbf{B}_2^t & \mathbf{C}_2^t\\
      \mathbf{A}_3^t & \mathbf{B}_3^t & \mathbf{C}_3^t\\
    \end{array}
  \right] \left[
    \begin{array}{ccc}
      \mathbf{a}\\
      \mathbf{b}\\
      \mathbf{c}\\
    \end{array}
  \right].
\end{align*}
Systematic nodes can then be rewritten in terms of the prime notations:
\begin{align}
\begin{split}
\mathbf{a}^t &= \mathbf{a}'^t \mathbf{A}_1' + \mathbf{b}'^t \mathbf{B}_1' +
\mathbf{c}'^t \mathbf{C}_1', \\
\mathbf{b}^t &= \mathbf{a}'^t \mathbf{A}_2' + \mathbf{b}'^t \mathbf{B}_2' +
\mathbf{c}'^t \mathbf{C}_2', \\
\mathbf{c}^t &= \mathbf{a}'^t \mathbf{A}_3' + \mathbf{b}'^t \mathbf{B}_3' +
\mathbf{c}'^t \mathbf{C}_3',
\end{split}
\end{align}
where the newly mapped encoding matrices $(\mathbf{A}_i', \mathbf{B}_i', \mathbf{C}_i)$'s are defined as:
\begin{align}
\label{eq-63_mapping}
\left[
   \begin{array}{ccc}
     \mathbf{A}_1'  & \mathbf{A}_2' & \mathbf{A}_3' \\
     \mathbf{B}_1'  & \mathbf{B}_2' & \mathbf{B}_3' \\
\mathbf{C}_1' & \mathbf{C}_2' & \mathbf{C}_3' \\
   \end{array}
 \right]: = \left[
   \begin{array}{ccc}
     \mathbf{A}_1  & \mathbf{A}_2 & \mathbf{A}_3 \\
     \mathbf{B}_1  & \mathbf{B}_2 & \mathbf{B}_3 \\
\mathbf{C}_1 & \mathbf{C}_2 & \mathbf{C}_3 \\
   \end{array}
 \right]^{-1}.
\end{align}
As in Section~\ref{sec:FrameworkSystematic}, we consider random construction of the code, i.e., each diagonal element in each encoding submatrix is i.i.d. uniformly drawn from $\mathbb{F}_q \setminus \{ 0 \}$.  Then, for a sufficiently large field size, the above composite matrix has non-zero determinant with probability 1 (again due to Schwartz-Zippel lemma). With this remapping, one can now dualize the relationship between systematic and parity node repair. Specifically, if all of the $\mathbf{A}_i'$'s, $\mathbf{B}_i'$'s, and $\mathbf{C}_i'$'s are \emph{diagonal} matrices, then exact repair of the parity nodes becomes transparent, as illustrated in Fig.~\ref{fig:63_EMSR_parity}.
Indeed $\mathbf{A}_i'$'s, $\mathbf{B}_i'$'s, and $\mathbf{C}_i'$'s are diagonal matrices, since these matrices are functions of diagonal matrices $\mathbf{A}_i$'s, $\mathbf{B}_i$'s, and $\mathbf{C}_i$'s. Therefore, following the same procedure in Section~\ref{sec:FrameworkSystematic}, we can guarantee exact repair of parity nodes with probability 1.


\begin{figure}[t]
\begin{center}
{\epsfig{figure=./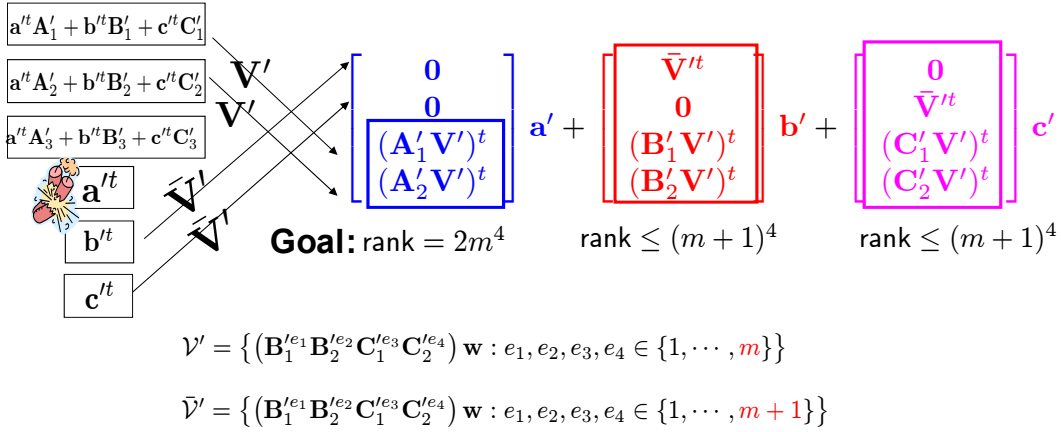, angle=0, width=0.85\textwidth}}
\end{center}
\caption{Illustration of exact repair of parity node 1 for $(6,3,4)$ Exact-Repair MDS codes. Notice that $\mathbf{A}_i'$'s, $\mathbf{B}_i'$'s, and $\mathbf{C}_i'$'s are also diagonal matrices, since these matrices are functions of diagonal matrices $\mathbf{A}_i$'s, $\mathbf{B}_i$'s, and $\mathbf{C}_i$'s. Survivor nodes 1 and 2 project their data onto linear subspaces spanned by column vectors of $\mathbf{V}':=[\mathbf{v}_1', \cdots, \mathbf{v}_{m^4}']$. Here $\mathbf{v}_i' \in \mathcal{V}'$. Notice that $\mathbf{B}_1' \mathbf{v}_i', \mathbf{B}_2' \mathbf{v}_i', \mathbf{C}_1' \mathbf{v}_i', \mathbf{C}_2' \mathbf{v}_i' \in \mathcal{\bar{V}}, \forall i=1,\cdots, m^4$. Hence, the matrix associated with interference $\mathbf{b}'$ has rank of at most $(m+1)^4$ instead of $2m^4$. Similarly the matrix associated with interference $\mathbf{c}'$ has rank of at most $(m+1)^4$. This enables simultaneous interference alignment as $m \rightarrow \infty$. Survival nodes 5 and 6 project their data using $\mathbf{\bar{V}}:=[\mathbf{\bar{v}}_1, \cdots, \mathbf{\bar{v}}_{(m+1)^4}]$ where $\mathbf{\bar{v}}_i \in \mathcal{\bar{V}}$. We can then clean out any interference. On the other hand, it is guaranteed that ${\sf rank}[\mathbf{A}_1' \mathbf{V}', \mathbf{A}_2' \mathbf{V}']= 2 m^4$ with probability 1, guaranteeing of decodability of desired signals with probability 1.}\label{fig:63_EMSR_parity}
\end{figure}

\begin{remarks}[Connecting to arbitrary $d$ nodes suffice for exact repair]
\label{remark:Arbitaryd}
In Section~\ref{sec:FrameworkSystematic}, we considered the only one connection configuration for exact repair: connecting to $k-1$ systematic nodes and $d-k+1$ parity nodes. We now consider other connection configurations. For example, consider the case when node 1 fails, as shown in Fig.~\ref{fig:63_EMSR}. Suppose we connect to nodes $(2,4,5,6)$ for exact repair of node 1: 1 systematic node and 3 parity nodes. The idea is to remap one parity node to make it look like a systematic node. We then virtually connect to 2 systematic and 2 parity nodes. Specifically, we can remap node 6 with $\mathbf{c}'^t$:
\begin{align}
\mathbf{c}'^t = \mathbf{a}^t \mathbf{A}_3 +
\mathbf{b}^t \mathbf{B}_3 +
\mathbf{c}^t \mathbf{C}_3
\end{align}
We can then rewrite node 4 and 5 in terms of $\mathbf{a}$, $\mathbf{b}$ and $\mathbf{c}'$ and therefore we virtually have connection with 2 systematic and 2 parity nodes. Note that corresponding encoding submatrices after remappring are still diagonal matrices. Hence, we can apply the same procedures as those in  Section~\ref{sec:FrameworkSystematic}.
\end{remarks}




\subsection{The MDS-Code Property}
\label{sec-MDScodeProperty}

The third part of our framework is to guarantee the MDS-code property. Consider all four possibilities corresponding to the Data Collector (DC) contacting (1) 3 systematic nodes; (2) 3 parity nodes; (3) 1 systematic and 2 parity nodes; (4) 1 systematic and 2 parity nodes.

The first is a trivial case. The second case has been already verified in the process of forming the dual structure. The third case requires the invertibility of all of each encoding submatrix. In this case, it is obvious since encoding submatrix is diagonal and each element is non-zero. The last case is also easy to check.
Consider a specific example where the DC connects to nodes 3, 4 and 5. In this case, we first recover $\mathbf{c}$ from node 3 and  subtract the terms associated with $\mathbf{c}$ from nodes 4 and 5. We then get:
\begin{align}
\begin{split}
\left[
  \begin{array}{cc}
    \mathbf{a}^t &  \mathbf{b}^t\\
  \end{array}
\right]
\left[
  \begin{array}{cc}
    \mathbf{A}_1 & \mathbf{A}_2 \\
    \mathbf{B}_1 & \mathbf{B}_2 \\
  \end{array}
\right].
\end{split}
\end{align}
Again, using the technique in~\cite{Jafar:IC} and Schwartz-Zippel lemma, for a sufficiently large field size, this composite matrix has non-zero determinant with probability 1.

\subsection{Existence of Codes}
As mentioned several times, for a sufficiently large field size, a random construction for encoding submatrices suffices to guarantee exact repair of all nodes and MDS-code property with probability 1. Hence, we obtain the following theorem.

\begin{lemma}[$(6,3,4)$ Exact-Repair MDS Codes]
\label{theorem-63}
There exist vector linear Exact-Repair MDS codes that achieve the minimum repair bandwidth corresponding to the cutset bound of~(\ref{eq-MSRpoint}), allowing for any failed node to be exactly repaired with access to any arbitrary $d=4$ survivor nodes, provided storage symbols can be split into a sufficiently large number of subsymbols, and the field size can be made sufficiently large.
\end{lemma}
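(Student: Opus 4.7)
The plan is to execute the four-part framework of Sections~\ref{sec:FrameworkSystematic}--\ref{sec-MDScodeProperty} as a single probabilistic existence argument: pick diagonal encoding submatrices at random over a large field, convert each repair/MDS requirement into a polynomial non-vanishing condition on the diagonal entries, and apply Schwartz-Zippel to the product of all such conditions.

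First I would fix $N=(k-1)(d-k+1)=4$ and $B=m^N$, so that each node stores $\alpha m^4 = 2m^4$ subsymbols and each of $\mathbf{A}_i,\mathbf{B}_i,\mathbf{C}_i$ is a $2m^4\times 2m^4$ diagonal matrix whose nonzero diagonal entries are chosen i.i.d.\ uniformly from $\mathbb{F}_q\setminus\{0\}$. Build $\mathbf{V}$ from $\mathcal{V}$ as in (\ref{eq:Vform}) and $\bar{\mathbf{V}}$ from $\mathcal{\bar V}$ as in (\ref{eq:Vbarform}). Because diagonal matrices commute, the interference-alignment inclusions
\[
\mathrm{span}[\mathbf{B}_j\mathbf{V}]\subset\mathrm{span}[\bar{\mathbf{V}}],\qquad \mathrm{span}[\mathbf{C}_j\mathbf{V}]\subset\mathrm{span}[\bar{\mathbf{V}}]
\]
hold for every realization of the random entries, so alignment is automatic and need not be checked probabilistically.

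Next I would enumerate the remaining algebraic requirements as nonvanishing of a finite list of polynomials $P_\ell$ in the diagonal entries: (i) for each failed systematic node and each admissible set of $d$ survivors, the desired-signal matrix $[\mathbf{A}_1\mathbf{V},\mathbf{A}_2\mathbf{V}]$ (or its analogue) must have rank $2m^4$; (ii) the composite $6m^4\times 6m^4$ matrix in (\ref{eq-63_mapping}) must be invertible, which legitimizes the dual remapping; (iii) for each failed parity node the analogous desired-signal matrix in the primed system must be full rank; and (iv) for each $k$-subset of nodes the reconstruction matrix of Section~\ref{sec-MDScodeProperty} must be invertible. The key structural observation for (iii) is that the composite matrix in (\ref{eq-63_mapping}), after a block permutation, is block-diagonal with $2m^4$ blocks of size $3\times 3$, so its inverse has the same form and the primed submatrices $\mathbf{A}_i',\mathbf{B}_i',\mathbf{C}_i'$ are again diagonal. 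Consequently the parity-repair polynomials have exactly the same algebraic form as the systematic-repair polynomials, and Remark~\ref{remark:Arbitaryd} handles the remaining survivor configurations by absorbing one parity node into the systematic set via an additional diagonal remap.

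The main obstacle is to certify that each $P_\ell$ is not identically zero as a polynomial in the diagonal entries; only then does Schwartz-Zippel give a nonzero evaluation in a sufficiently large $\mathbb{F}_q$. This is exactly the content imported from the Cadambe-Jafar interference-alignment proof: by exploiting the monomial form of the columns of $\mathbf{V}$, one constructs a specific assignment of the diagonal entries under which the determinant of $[\mathbf{A}_1\mathbf{V},\mathbf{A}_2\mathbf{V}]$ reduces to a Vandermonde-type product of distinct monomials in the exponents $(e_1,e_2,e_3,e_4)$ and hence is nonzero; the same trick applies to the other desired-signal and reconstruction determinants. Granting this, I would take $q$ larger than the total degree of $\prod_\ell P_\ell$, conclude by Schwartz-Zippel that a uniformly random choice of diagonals avoids every $P_\ell=0$ hyperplane with positive probability, and therefore a valid code exists.

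Finally, the bandwidth accounting
\[
\gamma \;=\; (k-1)\,\frac{(m+1)^4}{m^4}+(d-k+1)\cdot 1 \;=\; 2\,\frac{(m+1)^4}{m^4}+2\;\longrightarrow\;4
\]
as $m\to\infty$ shows that the construction meets the cutset bound in (\ref{eq-MSRpoint}), completing the proof of the lemma.
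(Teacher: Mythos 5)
Your proposal is correct and follows essentially the same route as the paper: random diagonal encoding submatrices, deterministic interference alignment via the monomial structure of $\mathcal{V}$ and $\mathcal{\bar{V}}$, the dual remapping of (\ref{eq-63_mapping}) for parity repair, and a Schwartz--Zippel argument (with non-triviality of the determinant polynomials imported from the Cadambe--Jafar technique) covering the desired-signal rank, remapping invertibility, and MDS conditions. Your block-permutation observation that the composite matrix splits into $2m^4$ blocks of size $3\times 3$ is just a more explicit justification of the paper's claim that the primed submatrices remain diagonal.
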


\section{Generalization}
\label{sec-generalization}

As one can easily see, the interference alignment technique described in Section~\ref{sec:FrameworkSystematic} can be generalized to all admissible values of $(n,k,d)$, i.e., $k <n$ and $k \leq d \leq n-1$.

\begin{theorem}[$(n,k,d)$ Exact-Repair MDS Codes]
\label{theorem-any(nkd)}
  There exist vector linear Exact-Repair MDS codes that achieve the minimum repair bandwidth corresponding to the cutset bound of~(\ref{eq-MSRpoint}), allowing for any failed node to be exactly repaired with access to any arbitrary $d$ survivor nodes, where $k \leq d \leq n-1$, provided storage symbols can be split into a sufficiently large number of subsymbols, and the field size can be made sufficiently large.
   \end{theorem}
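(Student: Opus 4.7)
The plan is to run the same four-component construction of Section~\ref{sec-BasisFramework} with scale parameters now indexed by $(n,k,d)$. I would set the per-node storage $\alpha = d-k+1$ (equivalently $\beta = 1$ and $\mathcal{M} = k(d-k+1)$), the subsymbol exponent $N = (k-1)(d-k+1)$, and split every storage unit into $B = m^N$ subsymbols for a large integer $m$. For each systematic index $i \in \{1,\ldots,k\}$ and each parity index $j \in \{1,\ldots,n-k\}$ I would introduce an $\alpha m^N \times \alpha m^N$ \emph{diagonal} encoding submatrix $\mathbf{A}_{i,j}$ whose diagonal entries are drawn i.i.d.\ uniformly from $\mathbb{F}_q \setminus \{0\}$; parity node $j$ then stores $\sum_{i=1}^{k} \mathbf{u}_i^{t}\mathbf{A}_{i,j}$, where $\mathbf{u}_i$ denotes the $i$-th systematic vector.

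For exact repair of a failed systematic node $i^\ast$ I would first treat the canonical configuration consisting of the $k-1$ surviving systematic nodes and any chosen $d-k+1$ parity nodes. The parity survivors project onto the matrix $\mathbf{V}$ whose $m^N$ columns enumerate
\begin{align*}
\mathcal{V} := \Bigl\{ \Bigl( \prod_{(i,j) \in S} \mathbf{A}_{i,j}^{e_{i,j}} \Bigr) \mathbf{w} : e_{i,j} \in \{1,\ldots,m\} \Bigr\},
\end{align*}
where $S$ is the Cartesian product of the $k-1$ surviving systematic indices and the $d-k+1$ chosen parity indices and $\mathbf{w} = (1,\ldots,1)^{t}$. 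Because the $\mathbf{A}_{i,j}$ are diagonal they commute, so every column of $\mathbf{A}_{i,j}\mathbf{V}$ with $(i,j)\in S$ belongs to the enlarged set $\bar{\mathcal{V}}$ obtained by widening each exponent range to $\{1,\ldots,m+1\}$. Having the systematic survivors project onto the corresponding $\mathbf{\bar{V}}$ then aligns all $k-1$ interference subspaces inside $\operatorname{span}(\mathbf{\bar{V}})$, cancelling every interference component. Decodability of the resulting $(d-k+1)m^N$ equations in $(d-k+1)m^N$ unknowns reduces to showing that $[\mathbf{A}_{i^\ast,j_1}\mathbf{V},\ldots,\mathbf{A}_{i^\ast,j_{d-k+1}}\mathbf{V}]$ has full rank, which I would establish exactly as in~\cite{Jafar:IC}: exhibit an explicit exponent assignment under which the determinant is a nonzero monomial, then invoke the Schwartz-Zippel lemma~\cite{Motwani} to force full rank with probability one over a sufficiently large $\mathbb{F}_q$. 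Connection configurations that use fewer than $k-1$ surviving systematic nodes are reduced to the canonical one by the relabelling of Remark~\ref{remark:Arbitaryd}, which preserves the diagonal structure.

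For parity-node repair I would apply the dual construction of Section~\ref{sec:ParityNodeRepair}. Fix the failed parity node $j^\ast$, pick any $k$ parity indices containing $j^\ast$, and form the $k\times k$ block matrix of encoding submatrices restricted to these indices; because all blocks are diagonal and commute, the inverse is again a $k\times k$ array of diagonal blocks, and it exists with probability one whenever each of the $\alpha m^N$ coordinate-wise $k\times k$ scalar matrices is invertible, which is a standard Schwartz-Zippel condition. The primed encoding blocks obtained this way are themselves diagonal, so the entire systematic-repair argument applies verbatim in the primed basis and repairs $j^\ast$. The MDS property is verified by noting that each of the $\binom{n}{k}$ data-collector subsets reduces, after Gaussian elimination against any systematic participants, to almost-sure invertibility of a $k\times k$ block matrix of diagonal encoding submatrices, handled by one more Schwartz-Zippel application. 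A union bound over the finitely many repair and DC configurations, together with a sufficiently large $q$, then produces the desired code with probability arbitrarily close to one. The bandwidth count
\begin{align*}
\gamma = (k-1)\frac{(m+1)^N}{m^N} + (d-k+1) \longrightarrow d \quad \text{as } m\to\infty
\end{align*}
matches $d\beta$, i.e.\ the cutset bound of~(\ref{eq-MSRpoint}). The principal obstacle is the explicit Cadambe-Jafar style certification of the systematic-decodability polynomial: one must exhibit, using the commuting-diagonal action of the $\mathbf{A}_{i,j}$ on $\mathbf{w}$, an exponent pattern producing $(d-k+1) m^N$ linearly independent product vectors. Once this single combinatorial fact is in place, every remaining step is structural bookkeeping identical to the $(6,3,4)$ argument.
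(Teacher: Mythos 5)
Your construction is, step for step, the paper's own argument: the same $N=(k-1)(d-k+1)$, the same product sets $\mathcal{V}$ and $\bar{\mathcal{V}}$ generated by the interfering diagonal submatrices acting on $\mathbf{w}$, alignment via commutativity of diagonal matrices, Schwartz--Zippel (following~\cite{Jafar:IC,Motwani}) for decodability and for the MDS checks, the remapping of Remark~\ref{remark:Arbitaryd} for non-canonical repair configurations, and the bandwidth count $(k-1)\frac{(m+1)^N}{m^N}+(d-k+1)\to d$. The paper leaves the determinant certification at exactly the same level of detail you do, so that is not a point of divergence.

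The one genuine flaw is in your parity-repair step: you propose to ``pick any $k$ parity indices containing $j^\ast$'' and invert the corresponding $k\times k$ block matrix. This is impossible whenever $n-k<k$, i.e.\ precisely in the high-rate regime $\frac{k}{n}>\frac12$ that this theorem is chiefly about (for instance $(n,k,d)=(6,4,5)$ has only two parity nodes); your prescription silently assumes $n-k\geq k$, as in the $(6,3,4)$ example where $n-k=k=3$ and the paper inverts the full $3\times3$ block array. The repair is easy and stays inside your framework: choose as the new basis the contents of any $k$ nodes that include the failed parity node $j^\ast$ --- e.g.\ all $n-k$ parity nodes together with $2k-n$ systematic nodes when $n-k<k$. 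The resulting change-of-basis matrix is a $k\times k$ array of diagonal blocks (identity blocks for the systematic members), so by your own coordinate-wise observation it is invertible with probability one and its inverse again has diagonal blocks; in the primed description $j^\ast$ plays the role of a systematic node and your systematic-repair argument applies verbatim. With that correction your proposal coincides with the paper's proof sketch.
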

\begin{proof}
In the interests of conceptual simplicity, and to parallel the analysis of the $(6,4,3)$ example described earlier, we provide only a sketch of the proof for the general case. This can be formalized to be precise at the cost of much heavier notational clutter, which we consciously avoid.

\textbf{Systematic Node Repair:} Let $\mathbf{G}_{l}^{(i)}$ indicate an encoding submatrix for parity node $i$, associated with information unit $l$, where $1 \leq i \leq n-k$ and $1\leq l \leq k$. Let $\mathbf{w}_l$ be $l$th information-unit vector. Without loss of generality, consider exact repair of systematic node 1.
Using vector linear codes, we split each symbol into $B=m^N$ number of subsymbols, where $m$ is an arbitrarily large positive integer and the exponent $N$ is given by
\begin{align}
N = (k-1) (d-k+1).
\end{align}
The maximum file size (based on the cutset bound of~(\ref{eq-MSRpoint})) is $\mathcal{M}=k (d-k+1)$ units, inducing a storage cost $\alpha=d-k+1$ units. Since each subsymbol has $\frac{1}{m^N}$ capacity, each storage contains $\alpha m^N (= (d-k+1) m^N)$ number of subsymbols. Note that the size of encoding submatrices is $\alpha m^N$-by-$ \alpha m^4$.

A failed node 1 is exactly repaired through the following steps. Suppose without loss of generality that we connect $k-1$ systematic nodes and first $d-k+1$ parity nodes\footnote{As mentioned earlier, we can convert the other connection configurations into this particular configuration with the remapping technique}. Parity survivor nodes project their data using the following projection matrix:
\begin{align}
\mathbf{V}:=[\mathbf{v}_1, \cdots, \mathbf{v}_{m^N}] \in \mathbb{F}_q^{ \alpha m^N \times m^N},
\end{align}
where $\mathbf{v}_i \in \mathcal{V}$. The set $\mathcal{V}$ is defined as:
\begin{align}
\label{eq:VformG}
\mathcal{V} := \left \{ \prod_{i=1,\cdots, d-k+1, l=2,\cdots,k} \left[ \mathbf{G}_l^{(i)} \right]^{e_{i,l}}
 \mathbf{w}: e_{i,l} \in \left \{ 1, \cdots , m \right \} \right \},
\end{align}
where $\mathbf{w}=[1,\cdots,1]^t$. Note that $|\mathcal{V}| \leq m^N$.

Let us consider the equations downloaded from parity nodes:
\begin{align}
\begin{split}
&\mathbf{w}_1^t (\mathbf{G}_1^{(1)} \mathbf{V}) +
\mathbf{w}_2^t (\mathbf{G}_2^{(1)} \mathbf{V}) + \cdots +
\mathbf{w}_k^t (\mathbf{G}_k^{(1)} \mathbf{V}); \\
&\qquad \qquad \vdots \\
&\mathbf{w}_1^t (\mathbf{G}_1^{(d-k+1)} \mathbf{V}) +
\mathbf{w}_2^t (\mathbf{G}_2^{(d-k+1)} \mathbf{V}) + \cdots +
\mathbf{w}_k^t (\mathbf{G}_k^{(d-k+1)} \mathbf{V}).
\end{split}
\end{align}
Note that by (\ref{eq:VformG}), for $l \neq 1$, any column vector in $[\mathbf{G}_l^{(1)}\mathbf{V} , \cdots, \mathbf{G}_l^{(d-k+1)}\mathbf{V}]$ is an element of $\mathcal{\bar{V}}$ defined as:
\begin{align}
\label{eq:VbarformG}
\mathcal{\bar{V}} := \left \{ \prod_{i=1,\cdots, d-k+1, l=2,\cdots,k} \left[ \mathbf{G}_l^{(i)} \right]^{e_{i,l}}
 \mathbf{w}: e_{i,l} \in \left \{ 1, \cdots , m+1 \right \} \right \},
\end{align}
This implies that for $l \neq 1$, ${\sf rank}[\mathbf{G}_l^{(1)}\mathbf{V} , \cdots, \mathbf{G}_l^{(d-k+1)}\mathbf{V}] \leq (m+1)^N$. This allows for simultaneous interference alignment. Systematic survivor nodes project their data using the following \emph{projection matrix}:
\begin{align}
\mathbf{\bar{V}}:=[\mathbf{\bar{v}}_1, \cdots, \mathbf{\bar{v}}_{(m+1)^N}] \in \mathbb{F}_q^{\alpha m^N \times (m+1)^N},
\end{align}
where $\mathbf{\bar{v}}_i \in \mathcal{\bar{V}}$. We can then guarantee that for $l \neq 1$:
\begin{align}
\begin{split}
&{\sf span}[\mathbf{G}_l^{(1)}\mathbf{V} , \cdots, \mathbf{G}_l^{(d-k+1)}\mathbf{V}] \subset {\sf span [\mathbf{\bar{V}}}].
\end{split}
\end{align}
Hence, we can clean out any interference. Now let us consider the decodability of desired signals. To successfully recover $\mathbf{w}_1$, we need:
\begin{align}
\label{eq:fullrank_desiredG}
{\sf rank}[\mathbf{G}_1^{(1)}\mathbf{V} , \cdots, \mathbf{G}_1^{(d-k+1)}\mathbf{V}] = (d-k+1) m^N = \alpha m^N.
\end{align}
Using the technique in~\cite{Jafar:IC} and Schwartz-Zippel lemma~\cite{Motwani}, we can ensure the (\ref{eq:fullrank_desired}) \emph{with probability 1} for a sufficiently large field size.

Finally we validate that total repair bandwidth is:
\begin{align}
\begin{split}
\gamma &= (k-1) \frac{(m+1)^N}{m^N} + (d-k+1) \cdot \frac{m^N}{m^N} \\
&\longrightarrow d.
\end{split}
\end{align}
Note that as $m$ goes to infinity, total repair bandwidth approaches minimum repair bandwidth matching the cutset lower bound of~(\ref{eq-MSRpoint}).

\textbf{Parity Node Repair:} As discussed in Section~\ref{sec:ParityNodeRepair}, we can draw a dual structure by remapping parity nodes with primed new notations. The key observation is that newly mapped encoding submatrices are still diagonal matrices. Hence, we can apply the same technique used in systematic node repair.

\textbf{MDS-Code Property:}
We check the invertibility of a composite matrix when a Data Collector connects to $i$ systematic nodes and $k-i$ parity nodes for $i=0,\cdots, k$. As mentioned earlier, for a sufficiently large field size, the composite matrix has non-zero determinant with probability 1.
\end{proof}

\section{Conclusion}
Using interference alignment techniques, we have shown the existence of vector linear Exact-Repair MDS codes that attain the cutset lower bound on repair bandwidth for all admissible values of $(n,k,d)$.
We make use of the interference alignment scheme introduced in the context of wireless interference channels in~\cite{Jafar:IC} to provide insights into Exact-Repair MDS codes. Connecting the two problems allows us to show the existence of vector linear optimal Exact-Repair MDS codes in distributed storage systems.

\bibliographystyle{IEEEtran}
\bibliography{Storage_IA}

\end{document}